\documentclass[11pt]{article}

\usepackage{amsmath,amssymb,amsthm}
\usepackage{booktabs}
\usepackage{array}
\usepackage{algorithm}
\usepackage[noend]{algpseudocode}
\usepackage{hyperref}
\usepackage{geometry}
\usepackage{url}
\newtheorem{lemma}{Lemma}
\newtheorem{proposition}{Proposition}
\newtheorem{corollary}{Corollary}

\DeclareMathOperator{\round}{round}
\DeclareMathOperator*{\argmin}{arg\,min}

\newcommand{\Z}{\mathbb{Z}}
\newcommand{\R}{\mathbb{R}}
\newcommand{\one}{\mathbf{1}}

\title{\bf Faster Closest-Point Algorithms\\ for the $E_6^*$ and $E_7^*$ Lattices}
\author{Yuriy Reznik\\[3pt]
        \normalsize Massachusetts Institute of Technology\\
        \normalsize \texttt{yreznik@mit.edu}}
\date{}

\begin{document}
\maketitle

\begin{abstract}
The dual lattices $E_6^*$ and $E_7^*$ are of particular interest in source coding and data compression applications. Among all known lattices in 
dimensions six and seven they attain the smallest normalized second moments, i.e., the
smallest average quantization error. Their use in practice requires fast
closest-point (nearest-lattice-point) algorithms. The known approach,
due to Conway and Sloane and completed for $E_6$ and $E_6^*$ by
Takizawa, Yagi, and Kawabata (TYK), decodes these lattices as unions of
cosets of root lattices $A_n$: each coset is decoded separately, and
the best result is kept. This requires four coset decodings for
$E_7^*$ and six for $E_6^*$, together with explicit distance
computations.

This paper shows that all these coset decodings can be collapsed into a
single sweep. Reformulated in terms of glue vectors, the TYK
decompositions state that $E_7^*$ is the union of the even glue classes
of $A_7^*$, and that $E_6^*$ is a parity-matched sublattice of
$A_1^*\oplus A_5^*$. The candidate chain constructed by the
closest-point algorithm of McKilliam, Clarkson, and Quinn (MCQ) for
$A_n^*$ visits every glue class of $A_n$ exactly once and is optimal
within each class. Consequently, one sorted sweep per coordinate block
yields the closest points of all glue cosets simultaneously, and
$E_6^*$ and $E_7^*$ are decoded at roughly the cost of a single $A_5^*$
or $A_7^*$ quantization. Rough operation counts indicate a
$4$--$6\times$ reduction for $E_6^*$ and $3$--$4\times$ for $E_7^*$
relative to coset-by-coset decoding. We also discuss further
constant-factor improvements available from recent refinements of the
$A_n^*$ algorithms, and an open question concerning sort-free
linear-time decoding.
\end{abstract}

\section{Introduction}

\subsection{Motivation}

A central figure of merit of a lattice used for quantization is its
\emph{normalized second moment} $G$, which measures the average squared
quantization error per dimension incurred when a uniformly distributed
source is quantized to the nearest lattice point. In dimensions six and
seven, the best lattice quantizers known are the dual lattices $E_6^*$
and $E_7^*$~\cite[Ch.~2, Table~2.3]{ConwaySloaneBook}: their normalized
second moments, computed exactly by
Worley~\cite{Worley1987,Worley1988}, are the smallest among all known
six- and seven-dimensional lattices. Consequently, whenever an
application calls for vector quantization in these dimensions---examples
include gain/shape and subband quantizers in audio coding, quantization
of short feature or parameter blocks in learned models, and coded
modulation over six- or seven-dimensional constellations---$E_6^*$ and
$E_7^*$ are the natural first candidates.

A theoretically good lattice is useful in practice only if the
\emph{closest-point problem}---given a query $\mathbf{y}$, find
$\mathbf{x}^*=\argmin_{\mathbf{x}\in\Lambda}\|\mathbf{y}-\mathbf{x}\|^2$---can
be solved quickly. This paper presents faster closest-point algorithms
for $E_6^*$ and $E_7^*$; the same technique also covers $E_6$ and
$E_7$.

\subsection{Prior work}

Conway and Sloane~\cite{ConwaySloane1982} gave the classical fast
closest-point algorithms for the root lattices $\Z^n$, $A_n$, $D_n$,
$E_7$, $E_8$ and their duals. Their central device is \emph{coset
decoding}: if a lattice of interest can be written as a finite union
\[
    \Lambda'=\bigcup_{i=0}^{d-1}(\mathbf{r}_i+\Lambda)
\]
of shifted copies (cosets) of a lattice $\Lambda$ for which a
closest-point algorithm is available, then $\Lambda'$ is decoded by
quantizing $\mathbf{y}-\mathbf{r}_i$ to $\Lambda$ for each $i$, shifting
each result back by $\mathbf{r}_i$, and keeping the closest of the $d$
candidates. For example, $E_7^*$ is the union of $d=4$ cosets of $A_7$
and is decoded by four $A_7$ quantizations plus four distance
computations. Soft-decision variants of these decoders appear
in~\cite{ConwaySloane1986}. Notably, \cite{ConwaySloane1982} does not
treat $E_6$ and $E_6^*$.

That gap was closed by Takizawa, Yagi, and Kawabata
(TYK)~\cite{TYK2010}, who also generalized the optimality proofs of the
Conway--Sloane quantizers to $\ell_p$ norms. TYK showed that, in the
standard coordinates $E_6\subset\R^8$, the lattice splits across two
orthogonal coordinate blocks into a two-part union of direct products of
copies of $A_1$ and $A_5$ and their translates, and that $E_6^*$ is a
union of three cosets of $E_6$. The resulting decoder performs
$3\times2=6$ product-coset decodings and keeps the best.

In parallel, the closest-point problem for the lattices $A_n^*$ has seen
significant algorithmic progress. Conway and Sloane's original method
requires $O(n^2\log n)$ time, improved to $O(n^2)$
in~\cite{ConwaySloane1986}. Clarkson~\cite{Clarkson1999} reduced this to
$O(n\log n)$. McKilliam, Clarkson, and Quinn (MCQ)~\cite{MCQ} obtained a
particularly simple $O(n\log n)$ method, built around a one-dimensional
sweep over a chain of $n+1$ candidate points; we review it in
Section~\ref{sec:mcq}, as it is the foundation of this paper. McKilliam,
Clarkson, Smith, and Quinn (MCSQ)~\cite{MCSQ} subsequently replaced the
sort inside MCQ by bucketing, achieving $O(n)$ time, and a recent
refinement by the author~\cite{fanstarPaper,fanstar} reduces the
constant factors further; we return to these in
Section~\ref{sec:improvements}.

The contribution of this paper is to connect the two lines of work. We
show that the candidate chain of MCQ computes, in a single pass, the
closest points of \emph{all} cosets of $A_n$ inside $A_n^*$---precisely
the quantities that coset decoders for $E_6^*$ and $E_7^*$ compute one
at a time. Combined with a glue-vector reformulation of the TYK
decompositions, this collapses the four coset decodings of $E_7^*$, and
the six of $E_6^*$, into one sorted sweep per coordinate block.

\section{Preliminaries}
\label{sec:prelim}

\subsection{The lattices $A_n$, $A_n^*$, and glue vectors}
\label{sec:glue}

The root lattice $A_n$ is most conveniently described in $N=n+1$
ambient coordinates:
\[
    A_n=\{\mathbf{k}\in\Z^{N}:\one^T\mathbf{k}=0\},
\]
the integer points of the hyperplane
$H=\{\mathbf{x}\in\R^N:\one^T\mathbf{x}=0\}$, where $\one$ is the
all-ones vector. Its dual lattice $A_n^*$ (in the standard
normalization) is a denser lattice in the same hyperplane. It decomposes
as a disjoint union of $N$ shifted copies of $A_n$,
\begin{equation}
    A_n^*=\bigcup_{c=0}^{n}\bigl([c]+A_n\bigr),
    \qquad
    [c]=\Bigl(
        \bigl(\tfrac{c}{N}\bigr)^{N-c},\,
        \bigl(-\tfrac{N-c}{N}\bigr)^{c}
    \Bigr),
    \label{eq:glue}
\end{equation}
where $\alpha^i$ denotes the number $\alpha$ repeated $i$ times. The
shift vectors $[c]$ are called \emph{glue vectors}, and $c$ is the
\emph{glue index} or class of the coset~\cite[Ch.~4]{ConwaySloaneBook}.
Two properties make glue indices convenient. First, they add modulo $N$:
$[c]+[c']$ is congruent to $[c+c'\bmod N]$ modulo $A_n$. Second, class
membership is trivial to read off: a point of $A_n^*$ lies in class $c$
exactly when every one of its coordinates is congruent to $c/N$ modulo
$1$. For instance, the point
$(\tfrac13,\tfrac13,-\tfrac23)\in A_2^*$ has all coordinates congruent
to $1/3\pmod 1$ and lies in class $c=1$ of $A_2^*$ ($N=3$).

The smallest case, $A_1^*$, will appear as a building block below: it
lives on the line $x_1+x_2=0$ in $\R^2$, consists of the points
$(u,-u)$ with $u\in\tfrac12\Z$, and has two glue classes---class $0$
(integer $u$) and class $1$ (half-integer $u$), with glue vector
$[1]=(-\tfrac12,\tfrac12)$.

\subsection{The MCQ algorithm for $A_n^*$}
\label{sec:mcq}

We now review the closest-point algorithm of McKilliam, Clarkson, and
Quinn~\cite{MCQ} in a self-contained form, since our main results are
built directly on its primitives.

Every point of $A_n^*$ can be written as the projection of an integer
vector onto the hyperplane $H$: with $Q=I-\one\one^T/N$ denoting the
orthogonal projector onto $H$,
\[
    A_n^*=Q\,\Z^N
    =\Bigl\{\mathbf{k}-\frac{\one^T\mathbf{k}}{N}\one
       \;:\;\mathbf{k}\in\Z^N\Bigr\}.
\]
This representation is not unique---adding $\one$ to $\mathbf{k}$ leaves
$Q\mathbf{k}$ unchanged---but it is algorithmically convenient: to find
the closest point of $A_n^*$ to a query $\mathbf{y}\in H$, it suffices
to find a suitable integer representative $\mathbf{k}$.

\paragraph{Candidates by shifted rounding.}
Let $\round(\cdot)$ denote coordinatewise rounding to the nearest
integer. A natural family of candidate representatives is obtained by
rounding shifted copies of the query:
\[
    f(\lambda)=\round(\mathbf{y}+\lambda\one),
    \qquad \lambda\in\R .
\]
Since $f(\lambda+1)=f(\lambda)+\one$ and $Q\one=\mathbf{0}$, the
projected point $Qf(\lambda)$ is periodic in $\lambda$ with period $1$,
so only $\lambda\in[0,1)$ matters. McKilliam et al.\ prove that a
closest point of $A_n^*$ to $\mathbf{y}$ is always of the form
$Qf(\lambda)$ for some $\lambda$~\cite{MCQ}; intuitively, sliding
$\lambda$ trades off how many coordinates round up, which is exactly
the freedom that distinguishes the points of $A_n^*$ from plain
rounding.

\paragraph{Residuals, thresholds, and the candidate chain.}
Let $\mathbf{k}_0=\round(\mathbf{y})$ and let
\[
    z_j=y_j-k_{0,j}\in[-\tfrac12,\tfrac12)
\]
be the centered rounding residuals. As $\lambda$ grows from $0$ to $1$,
coordinate $j$ of $f(\lambda)$ increases by one at the moment $\lambda$
crosses the \emph{threshold}
\[
    \tau_j=\tfrac12-z_j\in(0,1] .
\]
Coordinates with large residuals (values that were ``almost rounded
up'') flip first. Sorting the thresholds in increasing order---
equivalently, sorting the residuals in decreasing order---and flipping
one coordinate at a time produces the \emph{candidate chain}
\[
    \mathbf{k}_0,\ \mathbf{k}_1,\ \dots,\ \mathbf{k}_n,
\]
where $\mathbf{k}_i$ equals $\mathbf{k}_{i-1}$ with one more coordinate
increased by one, and $\one^T\mathbf{k}_i=s+i$ with
$s=\one^T\mathbf{k}_0$. One of the projected points $Q\mathbf{k}_i$ is a
closest point of $A_n^*$ to $\mathbf{y}$.

\paragraph{Evaluating the candidates in constant time each.}
For $\mathbf{z}_i=\mathbf{y}-\mathbf{k}_i$, a direct computation using
$\mathbf{y}\in H$ gives the squared distance to the projected candidate:
\begin{equation}
    d_i=\|\mathbf{y}-Q\mathbf{k}_i\|^2
       =\beta_i-\frac{\alpha_i^2}{N},
    \qquad
    \alpha_i=\one^T\mathbf{z}_i,\quad
    \beta_i=\mathbf{z}_i^T\mathbf{z}_i .
    \label{eq:di}
\end{equation}
When candidate $i$ is formed from candidate $i-1$ by incrementing
coordinate $(i)$, the residual of that coordinate drops by one and all
others are untouched, so
\begin{equation}
    \alpha_i=\alpha_{i-1}-1,
    \qquad
    \beta_i=\beta_{i-1}-2z_{(i)}+1 .
    \label{eq:rec}
\end{equation}
Thus, after one $O(n\log n)$ sort of the thresholds, the entire chain
can be swept in $O(n)$ time, evaluating \eqref{eq:di} via
\eqref{eq:rec} in $O(1)$ per candidate and keeping the minimum. This is
the MCQ algorithm.

\section{Main results}
\label{sec:main}

Our results combine two ingredients: a reformulation of the TYK
decompositions of $E_6$, $E_6^*$, $E_7$, $E_7^*$ in the glue language of
Section~\ref{sec:glue}, and an observation about the MCQ chain that
appears not to have been exploited before.

\subsection{The TYK decompositions in glue form}
\label{sec:glueform}

TYK~\cite{TYK2010} work in the standard embedding
\[
    E_6=\{\mathbf{x}\in E_8: x_1+\dots+x_8=x_1+x_8=0\}\subset\R^8,
\]
which splits the coordinates into two orthogonal blocks: the pair
$(x_1,x_8)$ and the six-tuple $(x_2,\dots,x_7)$. Writing $\tilde A_1$
and $\tilde A_5$ for the copies of $A_1$ and $A_5$ supported on these
blocks, TYK's Lemma~1 states
\begin{equation}
    E_6
    =
    \bigl(\tilde A_1\otimes\tilde A_5\bigr)
    \;\cup\;
    \bigl((\tilde{\mathbf{d}}^{(1)}+\tilde A_1)\otimes
          (\tilde{\mathbf{d}}^{(2)}+\tilde A_5)\bigr),
    \label{eq:tyk6}
\end{equation}
with $\tilde{\mathbf{d}}^{(1)}=(-\tfrac12,\tfrac12)$ and
$\tilde{\mathbf{d}}^{(2)}=\bigl((-\tfrac12)^3,(\tfrac12)^3\bigr)$, and
$E_6^*=\bigcup_{i=0}^{2}(\mathbf{r}_i+E_6)$ with
$\mathbf{r}_1=\bigl(0,(-\tfrac23)^2,(\tfrac13)^4,0\bigr)$ and
$\mathbf{r}_2=-\mathbf{r}_1$. Similarly, in the sum-zero hyperplane of
$\R^8$,
\begin{equation}
    E_7=A_7\cup
    \bigl(\bigl((-\tfrac12)^4,(\tfrac12)^4\bigr)+A_7\bigr),
    \qquad
    E_7^*=\bigcup_{i=0}^{3}(\mathbf{r}'_i+A_7),
    \label{eq:tyk7}
\end{equation}
with $\mathbf{r}'_1=\bigl((-\tfrac34)^2,(\tfrac14)^6\bigr)$,
$\mathbf{r}'_2=\bigl((-\tfrac12)^4,(\tfrac12)^4\bigr)$,
$\mathbf{r}'_3=\bigl((-\tfrac14)^6,(\tfrac34)^2\bigr)$.

Comparing these shift vectors with \eqref{eq:glue} identifies every one
of them, up to coordinate permutation and lattice translation, as a glue
vector: $\tilde{\mathbf{d}}^{(1)}$ is the glue vector $[1]$ of $A_1$;
$\tilde{\mathbf{d}}^{(2)}$ is $[3]$ of $A_5$; the $A_5$-block parts of
$\mathbf{r}_1,\mathbf{r}_2$ are $[2]$ and $[4]$ of $A_5$; and
$\mathbf{r}'_1,\mathbf{r}'_2,\mathbf{r}'_3$ are $[2],[4],[6]$ of $A_7$.
Using additivity of glue indices (for example, the second part of
\eqref{eq:tyk6} shifted by $\mathbf{r}_1$ carries $A_5$-glue
$[3]+[2]=[5]$), the decompositions
\eqref{eq:tyk6}--\eqref{eq:tyk7} take the following form.

\begin{lemma}[glue form of the TYK decompositions]
\label{lem:glue}
With glue indices $j\in\{0,1\}$ for $A_1^*$ and $c$ for $A_5^*$ or
$A_7^*$,
\begin{align}
    E_7^* &= \bigcup_{c\in\{0,2,4,6\}} \bigl([c]+A_7\bigr),
    \qquad\quad
    E_7 = \bigcup_{c\in\{0,4\}} \bigl([c]+A_7\bigr),
    \label{eq:e7glue}\\[2pt]
    E_6^* &= \bigl\{(\mathbf{a},\mathbf{b})\in A_1^*\oplus A_5^*
              \;:\; j(\mathbf{a})\equiv c(\mathbf{b}) \pmod 2\bigr\},
    \label{eq:e6glue}\\[2pt]
    E_6 &= \bigl\{(\mathbf{a},\mathbf{b})\in A_1^*\oplus A_5^*
              \;:\; (j(\mathbf{a}),c(\mathbf{b}))\in\{(0,0),(1,3)\}\bigr\}.
    \notag
\end{align}
\end{lemma}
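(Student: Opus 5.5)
The plan is to take the TYK decompositions \eqref{eq:tyk6}--\eqref{eq:tyk7} as given and translate each shift vector into a glue index, using the two facts from Section~\ref{sec:glue}. First, class membership in $A_n^*$ is read off from the common residue of the coordinates modulo $1$: a point lies in class $c$ exactly when all coordinates are $\equiv c/N \pmod 1$. Second, glue indices add modulo $N$. Because of the first fact, coordinate permutations and $A_n$-translations do not matter, so I only need the residue of each shift vector's coordinates modulo $1$.

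\textbf{The $E_7$ and $E_7^*$ cases.} Here $N=8$. In $\mathbf{r}'_1=\bigl((-\tfrac34)^2,(\tfrac14)^6\bigr)$ every coordinate is $\equiv\tfrac14=\tfrac28$, so $\mathbf{r}'_1$ lies in class $2$. The same check places $\mathbf{r}'_2$ (all coordinates $\equiv\tfrac12=\tfrac48$) in class $4$, and $\mathbf{r}'_3$ (all coordinates $\equiv\tfrac34=\tfrac68$) in class $6$. Since $\mathbf{r}'_c\in[c]+A_7$, we have $\mathbf{r}'_c+A_7=[c]+A_7$. This gives \eqref{eq:e7glue} for $E_7^*$, and the $E_7$ case is the subcase $c\in\{0,4\}$.

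\textbf{The $E_6$ case.} I split coordinates into the $A_1$ block $(x_1,x_8)$ and the $A_5$ block $(x_2,\dots,x_7)$, with $N=2$ and $N=6$ respectively.
\begin{itemize}
\item $\tilde{\mathbf{d}}^{(1)}$ has coordinates $\equiv\tfrac12$, so it has $A_1$-glue $1$.
\item $\tilde{\mathbf{d}}^{(2)}$ has coordinates $\equiv\tfrac12=\tfrac36$, so it has $A_5$-glue $3$.
\end{itemize}
By \eqref{eq:tyk6}, $E_6$ is then exactly the set of pairs in $A_1^*\oplus A_5^*$ whose glue pair is $(0,0)$ or $(1,3)$. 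For this I need that each block of a point of $E_6$ lies in the corresponding dual lattice, and \eqref{eq:tyk6} supplies this directly.

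\textbf{The $E_6^*$ case.} The block parts of $\mathbf{r}_1$ are $(0,0)$ on the $A_1$ block and $\bigl((-\tfrac23)^2,(\tfrac13)^4\bigr)$ on the $A_5$ block. The latter has all coordinates $\equiv\tfrac13=\tfrac26$, so $\mathbf{r}_1$ has glue pair $(0,2)$ and $\mathbf{r}_2=-\mathbf{r}_1$ has $(0,4)$. Adding these modulo $(2,6)$ to the two $E_6$ classes gives six glue pairs:
\[
(0,0),\ (0,2),\ (0,4),\ (1,3),\ (1,5),\ (1,1).
\]
These six are precisely the pairs $(j,c)$ with $j\equiv c\pmod 2$, since there are $2\cdot 3=6$ such pairs. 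Hence $E_6^*$ equals the parity-matched set in \eqref{eq:e6glue}.

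\textbf{Main obstacle.} The only delicate point is the reverse inclusion: the parity-matched set must be exactly the union of the six product cosets, with nothing extra. This holds because each product coset $([j]+A_1)\oplus([c]+A_5)$ is exactly the set of points of $A_1^*\oplus A_5^*$ with glue pair $(j,c)$. That follows from the disjoint decomposition \eqref{eq:glue} applied blockwise, so the glue pair is a well-defined function on $A_1^*\oplus A_5^*$.

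I would also confirm that the TYK shift vectors lie in the sum-zero hyperplane of their block, so that they genuinely belong to $A_n^*$. This is a one-line check: for example, $-\tfrac43+\tfrac43=0$ for $\mathbf{r}_1$, and $-\tfrac32+\tfrac32=0$ for $\mathbf{r}'_1$.
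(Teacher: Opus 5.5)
Your proposal is correct and follows the paper's own argument. The paper likewise identifies each TYK shift vector with a glue vector, up to coordinate permutation and $A_n$-translation; your mod-$1$ residue test is exactly that identification. It then uses additivity of glue indices to obtain the six parity-matched pairs, and your reverse-inclusion and sum-zero checks only fill in details the paper leaves implicit. The one slip is notational: what you actually show is $\mathbf{r}'_i\in[2i]+A_7$ for $i=1,2,3$, not $\mathbf{r}'_c\in[c]+A_7$.
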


In words: $E_7^*$ is the union of the \emph{even} glue classes of
$A_7^*$, and $E_6^*$ is a \emph{parity-matched} sublattice of the direct
sum $A_1^*\oplus A_5^*$---a ``checkerboard'' coupling of the two blocks,
structurally analogous to the familiar description
$E_8=D_8\cup(\tfrac12^8+D_8)$. Note that all six glue classes of $A_5$
appear in \eqref{eq:e6glue}: the union of the blockwise search spaces is
the \emph{entire} dual $A_1^*\oplus A_5^*$, restricted only by the
parity coupling.

\subsection{One sweep decodes all cosets}
\label{sec:sweep}

Coset decoding treats each class in
\eqref{eq:e7glue}--\eqref{eq:e6glue} as a separate quantization problem.
The key observation of this paper is that the MCQ candidate chain
already contains the answers to all of them.

\begin{proposition}[per-coset optimality of the MCQ chain]
\label{prop:coset}
Let $\mathbf{y}\in H$ be a query for $A_n^*$ and let
$\mathbf{k}_0,\dots,\mathbf{k}_n$ be the candidate chain of
Section~\ref{sec:mcq}, built from exactly sorted thresholds. Then the
chain visits each glue class of $A_n$ inside $A_n^*$ exactly once, and
for every $i$ the point $Q\mathbf{k}_i$ is a closest point to
$\mathbf{y}$ within its class, namely class $(s+i)\bmod N$.
\end{proposition}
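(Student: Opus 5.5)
First I will establish the class bookkeeping, and then prove optimality within each class by reducing the class-$c$ problem to a fixed-sum integer problem.

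\emph{Class bookkeeping.} For $\mathbf{k}\in\Z^N$, every coordinate of $Q\mathbf{k}=\mathbf{k}-(\one^T\mathbf{k}/N)\one$ is congruent to $-\one^T\mathbf{k}/N$ modulo $1$. So $Q\mathbf{k}$ lies in a glue class determined by $\one^T\mathbf{k}\bmod N$, up to the sign convention of \eqref{eq:glue}. I will fix that convention so the class of $Q\mathbf{k}$ reads as $\one^T\mathbf{k}\bmod N$, matching the statement. Since $\one^T\mathbf{k}_i=s+i$ for $i=0,\dots,n$ runs through $N$ consecutive integers, the chain visits each class exactly once.

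\emph{Reduction within a class.} The points of class $c$ are exactly $Q\mathbf{k}$ with $\one^T\mathbf{k}=m$ for any fixed $m\equiv c\pmod N$; shifting $\mathbf{k}$ by multiples of $\one$ changes $m$ by $N$ and leaves $Q\mathbf{k}$ unchanged. So I fix $m=s+i$. For such $\mathbf{k}$, \eqref{eq:di} gives $\|\mathbf{y}-Q\mathbf{k}\|^2=\|\mathbf{y}-\mathbf{k}\|^2-(\one^T(\mathbf{y}-\mathbf{k}))^2/N$. Because $\mathbf{y}\in H$, we have $\one^T(\mathbf{y}-\mathbf{k})=-m$, so the second term is the constant $m^2/N$. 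The within-class problem is therefore: minimize $\|\mathbf{y}-\mathbf{k}\|^2=\sum_j (y_j-k_j)^2$ over $\mathbf{k}\in\Z^N$ with $\sum_j k_j=s+i$. This is a separable convex integer problem with one linear constraint.

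\emph{Exchange argument.} Write $\mathbf{k}=\mathbf{k}_0+\mathbf{e}$ with $\sum e_j=i\in\{0,\dots,n\}$. The per-coordinate cost $g_j(e)=(z_j-e)^2$ is convex in $e$, with marginal costs $g_j(e+1)-g_j(e)=1-2z_j+2e$ and $g_j(e-1)-g_j(e)=1+2z_j-2e$. At $e=0$, with $z_j\in[-\tfrac12,\tfrac12)$, an upward step costs $1-2z_j\in(0,2]$ and a second upward step costs $3-2z_j\ge 2$. A downward step costs $1+2z_j\ge 0$, so any downward move paired with an extra upward move elsewhere costs at least as much as not making either. By the standard greedy optimality for separable convex functions under a sum constraint, an optimal $\mathbf{e}$ takes the $i$ cheapest unit increments. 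The cheapest increments are the first steps $+1$ on the coordinates with the largest $z_j$. Each of these costs at most $2$, which is no more than any second step ($\ge 2$), and any down-then-up pair costs $\ge 2$ in total. Hence $\mathbf{e}$ is the indicator of the $i$ largest residuals, i.e.\ $\mathbf{k}=\mathbf{k}_i$ with exactly sorted thresholds. Ties only produce alternative minimizers, which is why the statement says ``a closest point.''

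\emph{Main obstacle.} The delicate step is the exchange argument: I must show rigorously that no optimal $\mathbf{e}$ with sum $i$ uses a coordinate twice or uses negative entries. I plan to do this by a pairwise swap: any $\mathbf{e}$ not of the greedy form contains either a pair $(e_a\ge2$ or $e_a\le -1)$ together with some $e_b=0$ on a top-$i$ coordinate, or a misordered pair. Moving one unit between them does not increase the cost, by the marginal-cost bounds above, and this can be iterated to reach $\mathbf{k}_i$. The boundary values $z_j=-\tfrac12$ need care in the inequalities.
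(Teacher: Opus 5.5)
Your proof follows the paper's route exactly (class bookkeeping from $\one^T\mathbf{k}_i=s+i$, reduction to minimizing $\|\mathbf{y}-\mathbf{k}\|^2$ at fixed coordinate sum, then constrained rounding), except that you prove the constrained-rounding step by an exchange argument where the paper simply cites Conway--Sloane and TYK. In that swap, pair a negative entry $e_a\le -1$ with some $e_b\ge 1$, which exists since $\sum_j e_j=i\ge 0$, rather than with a zero entry; the move then changes the cost by $2+2(z_b-z_a)+2(e_a-e_b)<0$. Your remark that under \eqref{eq:glue} the class of $Q\mathbf{k}$ is really $(-\one^T\mathbf{k})\bmod N$ is correct, and harmless here, since every class restriction the paper uses (even classes, $\{0,4\}$ mod $8$, $\{0,3\}$ mod $6$) is closed under negation.
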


\begin{proof}
Since $\one^T\mathbf{k}_i=s+i$ for $i=0,\dots,n$, the coordinate sums
exhaust all residues modulo $N$; and $Q\mathbf{k}$, $Q\mathbf{k}'$ lie
in the same glue class if and only if
$\one^T\mathbf{k}\equiv\one^T\mathbf{k}'\pmod N$. This gives the first
claim. For the second, fix a class and a representative sum $m=s+i$.
Every point of the class has a representative with
$\one^T\mathbf{k}=m$ exactly, because adding $\one$ to $\mathbf{k}$
leaves $Q\mathbf{k}$ unchanged while shifting the sum by $N$. On the set
$\{\mathbf{k}:\one^T\mathbf{k}=m\}$, the objective \eqref{eq:di} has a
constant second term, so minimizing it is the same as minimizing
$\|\mathbf{y}-\mathbf{k}\|^2$ subject to the fixed coordinate sum. The
solution of this classical constrained rounding problem is to round
$\mathbf{y}$ and then increment the $i$ coordinates with the largest
residuals (see, e.g., the $A_n$ quantizer of~\cite{ConwaySloane1982} and
its optimality proof in~\cite{TYK2010})---which is exactly
$\mathbf{k}_i$, since the chain increments coordinates in decreasing
order of residual.
\end{proof}

\begin{corollary}
\label{cor:allcosets}
One sweep along the sorted MCQ chain returns, simultaneously and at no
extra asymptotic cost, a closest point of every glue coset of $A_n$ to
$\mathbf{y}$, together with its squared distance $d_i$.
\end{corollary}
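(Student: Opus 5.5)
The corollary follows by combining Proposition~\ref{prop:coset} with the cost accounting of Section~\ref{sec:mcq}. I will first fix the sorted order of the thresholds $\tau_j$ (equivalently, residuals $z_j$ in decreasing order). This yields the chain $\mathbf{k}_0,\dots,\mathbf{k}_n$ with $\one^T\mathbf{k}_i=s+i$. By Proposition~\ref{prop:coset}, the map $i\mapsto (s+i)\bmod N$ is a bijection from $\{0,\dots,n\}$ onto the glue indices. For each $i$, $Q\mathbf{k}_i$ is a closest point of the coset $[(s+i)\bmod N]+A_n$ to $\mathbf{y}$. So the collection $\{Q\mathbf{k}_i\}_{i=0}^n$, indexed by class, is exactly a list containing one closest point per glue coset.

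Next I will check that the sweep actually produces these points and their distances. Initialize $\alpha_0=\one^T\mathbf{z}_0$ and $\beta_0=\mathbf{z}_0^T\mathbf{z}_0$ in $O(n)$ time. Then apply the recurrence \eqref{eq:rec} along the sorted order, which gives $\alpha_i,\beta_i$ and hence $d_i=\beta_i-\alpha_i^2/N$ via \eqref{eq:di} in $O(1)$ each. Instead of keeping only the running minimum as in MCQ, I will store all $N$ values $d_i$ in an array indexed by class $(s+i)\bmod N$; this is again $O(1)$ per step. Each point $Q\mathbf{k}_i$ is implicitly represented by $\mathbf{k}_0$, the sorted permutation, and $i$: it is $\mathbf{k}_0$ with its first $i$ sorted coordinates incremented. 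Hence every coset's closest point can be materialized on demand in $O(n)$. If one needs explicit vectors only for selected classes, this costs nothing asymptotically beyond the MCQ bound.

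The total cost is therefore one sort, $O(n\log n)$, plus an $O(n)$ sweep. This is identical to the MCQ algorithm itself, which justifies ``no extra asymptotic cost.'' The only step requiring care is the claim that the sorted chain is exactly the constrained-rounding optimum for each sum. That is already supplied by Proposition~\ref{prop:coset}, including its handling of ties: with exactly sorted thresholds, any tie-breaking still increments $i$ coordinates of maximal residual. So I expect no genuine obstacle here beyond stating the bookkeeping precisely.
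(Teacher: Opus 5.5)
Your proposal is correct and matches the paper, which gives no separate proof and treats the corollary as immediate from Proposition~\ref{prop:coset} (the bijection onto glue classes and per-class optimality) together with the $O(1)$-per-candidate recurrences \eqref{eq:rec}--\eqref{eq:di} after a single sort, exactly as you argue. Your point that the per-class points must be returned implicitly (via $\mathbf{k}_0$, the sorted permutation, and $i$), since writing out all $N$ vectors explicitly would take $\Theta(n^2)$ time, makes precise something the paper leaves tacit; separately, under the paper's stated convention $Q\mathbf{k}_i$ actually lies in class $-(s+i)\bmod N$ rather than $(s+i)\bmod N$, which is harmless here because any bijective labeling suffices and parity is unaffected.
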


Corollary~\ref{cor:allcosets} converts Lemma~\ref{lem:glue} directly
into single-sweep decoders. For $E_7^*$, instead of taking the global
minimum of $d_i$ along the chain, the sweep restricts the minimization
to candidates whose class $(s+i)\bmod 8$ is even---i.e., to indices $i$
with $i\equiv s\pmod 2$. For $E_6^*$, one sweep per block yields
per-class minima: $D_a[j]$ for the two classes of the (trivial) $A_1$
block, and $D_b[c]$, $c=0,\dots,5$, for the $A_5$ block. Because the two
blocks occupy disjoint coordinates, squared distances add, and by
\eqref{eq:e6glue} the coupling reduces to the $O(1)$ comparison
\begin{equation}
    \min\Bigl(
        D_a[0]+\min_{c\,\mathrm{even}}D_b[c],\;\;
        D_a[1]+\min_{c\,\mathrm{odd}}D_b[c]
    \Bigr).
    \label{eq:combine}
\end{equation}
Since the MCQ objective $d_i$ in \eqref{eq:di} is the true squared
distance, the quantities being added in \eqref{eq:combine} are directly
comparable; no renormalization between blocks is needed.

\subsection{The algorithms}
\label{sec:algs}

Algorithms~\ref{alg:e7} and~\ref{alg:e6} state the resulting procedures.
Both are direct instantiations of the MCQ sweep with class-restricted
minimization; the reader familiar with~\cite{MCQ} will recognize every
step. For the fixed block sizes arising here ($N=8$ and $N=6$), the
sorting step is a constant-size sorting network ($19$ and $12$
compare-exchange operations, respectively), so the asymptotic
distinction between $O(n\log n)$ and $O(n)$ is immaterial; we return to
it in Section~\ref{sec:improvements}.

\begin{algorithm}[t]
\caption{Closest point in $E_7^*$. Input
$\mathbf{y}\in\R^8$ with $\one^T\mathbf{y}=0$; $N=8$.}
\label{alg:e7}
\begin{algorithmic}[1]
\State $\mathbf{k}\gets\round(\mathbf{y})$;\quad
       $\mathbf{z}\gets\mathbf{y}-\mathbf{k}$;\quad
       $s\gets\one^T\mathbf{k}$
\State $\alpha\gets\one^T\mathbf{z}$;\quad
       $\beta\gets\mathbf{z}^T\mathbf{z}$
\State sort indices so that $z_{(1)}\ge z_{(2)}\ge\dots\ge z_{(N)}$
       \Comment{equivalently, thresholds $\tau_{(i)}$ ascending}
\State $D\gets+\infty$;\quad $m\gets 0$
\If{$s\equiv 0\pmod 2$}
    \State $D\gets\beta-\alpha^2/N$
    \Comment{candidate $0$ lies in class $s\bmod N$}
\EndIf
\For{$i\gets 1$ \textbf{to} $N-1$}
    \State $\alpha\gets\alpha-1$;\quad
           $\beta\gets\beta-2z_{(i)}+1$
    \If{$s+i\equiv 0\pmod 2$}
        \Comment{even glue classes only, cf.\ \eqref{eq:e7glue}}
        \State $h\gets\beta-\alpha^2/N$
        \If{$h<D$}  $D\gets h$;\quad $m\gets i$ \EndIf
    \EndIf
\EndFor
\State $k_{(j)}\gets k_{(j)}+1$ for $j=1,\dots,m$
\State $\mu\gets(\one^T\mathbf{k})/N$;\quad
       \Return $\mathbf{k}-\mu\one$
\end{algorithmic}
\end{algorithm}

\begin{algorithm}[t]
\caption{Closest point in $E_6^*$. Input
$\mathbf{y}\in\R^8$ with $y_1+y_8=0$ and $y_2+\dots+y_7=0$.}
\label{alg:e6}
\begin{algorithmic}[1]
\State \emph{($A_1$ block, coordinates $(y_1,y_8)=(t,-t)$; closed form)}
\State $u_0\gets\round(t)$;\quad $D_a[0]\gets 2(t-u_0)^2$
\State $u_1\gets\round(t-\tfrac12)+\tfrac12$;\quad
       $D_a[1]\gets 2(t-u_1)^2$
\State \emph{($A_5$ block, coordinates
       $\mathbf{y}^{(b)}=(y_2,\dots,y_7)$; $N=6$)}
\State run the sweep of Algorithm~\ref{alg:e7} on $\mathbf{y}^{(b)}$,
       but track \emph{two} running minima: $D_b[\mathrm{ev}]$ over
       candidates with $(s+i)$ even and $D_b[\mathrm{od}]$ over
       candidates with $(s+i)$ odd, with winning indices
       $m_{\mathrm{ev}}$, $m_{\mathrm{od}}$
\State \emph{(combine, cf.\ \eqref{eq:e6glue}--\eqref{eq:combine})}
\If{$D_a[0]+D_b[\mathrm{ev}] \le D_a[1]+D_b[\mathrm{od}]$}
    \State $j^*\gets 0$;\quad $m^*\gets m_{\mathrm{ev}}$
\Else
    \State $j^*\gets 1$;\quad $m^*\gets m_{\mathrm{od}}$
\EndIf
\State reconstruct the $A_5$-block point from $m^*$ as in
       Algorithm~\ref{alg:e7}; set the $A_1$-block point to
       $(u_{j^*},-u_{j^*})$
\State \Return the concatenation of the two block points
\end{algorithmic}
\end{algorithm}

Decoding $E_7$ is identical to Algorithm~\ref{alg:e7} with the
restriction $(s+i)\bmod 8\in\{0,4\}$, and decoding $E_6$ is identical to
Algorithm~\ref{alg:e6} with classes $c=0$ and $c=3$ tracked in place of
the even/odd minima. Decoding $A_7^*$ or $A_5^*$ itself corresponds to
no restriction at all---the three problems differ only in which chain
candidates participate in the minimization.

In all cases, the work amounts to a single MCQ-style quantization of
each coordinate block---one block of size $8$ for $E_7^*$, and blocks of
sizes $6$ and $2$ for $E_6^*$---plus a constant amount of bookkeeping.
No shifted copies of the input are formed, and no explicit
$\ell_2$ distances between $8$-dimensional vectors are ever computed.

\section{Complexity comparison}
\label{sec:complexity}

Coset decoding of $E_7^*$ in the style of~\cite{ConwaySloane1982}
performs four $A_7$ quantizations of shifted inputs and four explicit
distance evaluations; the TYK decoder for $E_6^*$~\cite{TYK2010}
performs six product-coset decodings (each one $A_5$-coset and one
$A_1$-coset quantization) and six distance evaluations. The proposed
decoders perform one sweep per block. Table~\ref{tab:counts} lists rough
operation counts; exact constants depend on implementation choices (in
particular, each coset decoding also requires forming the shifted input
$\mathbf{y}-\mathbf{r}_i$ and shifting the output back, counted under
``shift add/subtract passes'').

\begin{table}[t]
\centering
\caption{Rough operation counts per query for closest-point computation
in $E_6^*$ and $E_7^*$. ``Sorts'' counts residual orderings as
(number)$\times$(size); sizes $6$ and $8$ admit fixed sorting networks
of $12$ and $19$ compare-exchange operations, respectively. Distance
evaluations are explicit $8$-dimensional squared-distance computations
($\approx 24$ flops each). Chain objective evaluations are the $O(1)$
steps \eqref{eq:rec}--\eqref{eq:di}.}
\label{tab:counts}
\small
\begin{tabular}{l cc cc}
\toprule
 & \multicolumn{2}{c}{$E_6^*$} & \multicolumn{2}{c}{$E_7^*$} \\
\cmidrule(lr){2-3}\cmidrule(lr){4-5}
Operation & TYK~\cite{TYK2010} & Proposed & Coset~\cite{ConwaySloane1982} & Proposed \\
\midrule
Coordinate roundings          & $48$ & $8$ & $32$ & $8$  \\
Sorts of residuals            & $6\times 6$ & $1\times 6$ & $4\times 8$ & $1\times 8$ \\
Chain objective evaluations   & --   & $6$ & --   & $8$  \\
Explicit distance evaluations & $6$  & $0$ & $4$  & $0$  \\
Shift add/subtract passes     & $12$ & $0$ & $8$  & $0$  \\
Output/reconstruction passes  & $1$  & $1$ & $1$  & $1$  \\
\bottomrule
\end{tabular}
\end{table}

Summing the dominant terms, the proposed method reduces the work by
roughly a factor of $4$--$6$ for $E_6^*$ and $3$--$4$ for $E_7^*$. Since
the block sweeps are essentially $A_5^*$ and $A_7^*$ quantizations, the
resulting $E_6^*$ and $E_7^*$ decoders are expected to be comparable in
speed to the standard $E_8$ decoder, removing the main computational
objection to using the best known six- and seven-dimensional quantizers
in practice.

\section{Additional improvements and future work}
\label{sec:improvements}

The algorithms above are built on MCQ because its exactly sorted chain
is what makes Proposition~\ref{prop:coset} unconditional. Two further
lines of improvement are available, one immediate and one requiring new
theory.

\paragraph{Constant-factor refinements.}
For $A_n^*$, MCSQ~\cite{MCSQ} and the author's
refinement~\cite{fanstarPaper} improve on MCQ in ways that are largely
orthogonal to the results of this paper and transfer to the block
sweeps. The refinements of~\cite{fanstarPaper} include: dropping the
constant $\beta_0=\mathbf{z}^T\mathbf{z}$ from the objective (in the
$E_6^*$ combination \eqref{eq:combine} the per-block constants appear in
both totals and cancel, so reduced objectives remain valid); scaling the
reduced objective so that its quadratic term, the surplus
$\alpha_0=-s$, and the flip counts are exact integers (for the
two-block case the objectives must be rescaled to a common factor,
$\mathrm{lcm}(2,6)=6$); fusing the rounding and threshold-computation
passes so that residuals are consumed as they are produced; and
reconstructing the output branchlessly. When the input coordinates are
rationals with a common denominator---histograms, empirical types---the
entire computation becomes exact integer arithmetic. Since the block
sizes here are the fixed constants $6$ and $8$, fully unrolled
implementations with hard-coded sorting networks, in the style of the
small-$n$ routines of~\cite{fanstar}, are the natural realization.

\paragraph{Sort-free decoding: an open question.}
MCSQ~\cite{MCSQ} eliminates the sort inside MCQ by placing the
thresholds into $N$ buckets of width $1/N$ and sweeping the buckets in
order, achieving $O(n)$ total time. Its correctness rests on an interval
lemma: the optimal candidate $Qf(\lambda)$ of $A_n^*$ is constant on a
$\lambda$-interval of length at least $1/N$, so the coarse bucket
ordering cannot miss it. That guarantee, however, concerns only the
\emph{global} optimum. The per-coset optimality of
Proposition~\ref{prop:coset} genuinely requires exact ordering: for a
fixed coset, the optimal candidate occupies a $\lambda$-interval equal
to a gap between consecutive sorted thresholds, which can be arbitrarily
short. For the fixed small blocks of $E_6^*$ and $E_7^*$ this is
irrelevant, but it leaves an interesting theoretical question. The
constrained search spaces arising here are themselves lattices---the
even-glue union in \eqref{eq:e7glue} is an index-$2$ sublattice of
$A_7^*$, and \eqref{eq:e6glue} is an index-$2$ sublattice of
$A_1^*\oplus A_5^*$---so one may ask whether an analogous interval lemma
holds for glue-constrained decoding (plausibly with the bound $1/N$
replaced by a constant reflecting the sparser sublattice). A positive
answer would yield sort-free linear-time decoders for the whole family
of glue constructions over $A_n$ blocks in growing dimension. More
broadly, the template of Section~\ref{sec:sweep}---sweep once per
block, harvest per-class minima, then solve a small matching problem
over the glue group---applies verbatim to other lattices assembled by
gluing $A_n$ blocks, and cataloguing the lattices reachable this way is
a natural next step. An experimental evaluation within the open-source
project \texttt{fanstar}~\cite{fanstar} is also planned.

\section{Conclusions}

We presented faster closest-point algorithms for $E_6^*$ and $E_7^*$,
the best lattice quantizers known in dimensions six and
seven~\cite{ConwaySloaneBook,Worley1987,Worley1988}. The construction
rests on two observations. First, the TYK decompositions of $E_6$,
$E_6^*$, $E_7$, and $E_7^*$, restated in glue-vector language, exhibit
these lattices as glue-constrained subsets of $A_1^*\oplus A_5^*$ and
$A_7^*$: $E_7^*$ is the union of the even glue classes of $A_7^*$, and
$E_6^*$ is a parity-matched sublattice of $A_1^*\oplus A_5^*$. Second,
the candidate chain of the MCQ algorithm visits every glue class of
$A_n$ exactly once and is optimal within each class, so a single sorted
sweep yields all per-coset minima simultaneously. Together, these
replace the four to six separate coset decodings and explicit distance
computations of prior methods by one sweep per coordinate block,
reducing operation counts by roughly $3$--$6\times$.


\begin{thebibliography}{9}

\bibitem{ConwaySloane1982}
J.~H. Conway and N.~J.~A. Sloane,
``Fast quantizing and decoding algorithms for lattice quantizers and
codes,''
\emph{IEEE Trans. Inf. Theory}, vol.~28, no.~2, pp.~227--232, Mar.~1982.

\bibitem{ConwaySloane1986}
J.~H. Conway and N.~J.~A. Sloane,
``Soft decoding techniques for codes and lattices, including the Golay
code and the Leech lattice,''
\emph{IEEE Trans. Inf. Theory}, vol.~32, no.~1, pp.~41--50, Jan.~1986.

\bibitem{ConwaySloaneBook}
J.~H. Conway and N.~J.~A. Sloane,
\emph{Sphere Packings, Lattices and Groups}, 3rd ed.
New York: Springer-Verlag, 1998.

\bibitem{Worley1987}
R.~T. Worley,
``The Voronoi region of $E_6^*$,''
\emph{J. Austral. Math. Soc., Ser.~A}, vol.~43, no.~2, pp.~268--278,
1987.

\bibitem{Worley1988}
R.~T. Worley,
``The Voronoi region of $E_7^*$,''
\emph{SIAM J. Discrete Math.}, vol.~1, no.~1, pp.~134--141, 1988.

\bibitem{Clarkson1999}
I.~V.~L. Clarkson,
``An algorithm to compute a nearest point in the lattice $A_n^{*}$,''
in \emph{Applied Algebra, Algebraic Algorithms and Error-Correcting
Codes} (Lecture Notes in Computer Science), vol.~1719.
Springer, 1999, pp.~104--120.

\bibitem{MCQ}
R.~G. McKilliam, I.~V.~L. Clarkson, and B.~G. Quinn,
``An algorithm to compute the nearest point in the lattice $A_n^{*}$,''
\emph{IEEE Trans. Inf. Theory}, vol.~54, no.~9, pp.~4378--4381,
Sep.~2008.

\bibitem{MCSQ}
R.~G. McKilliam, I.~V.~L. Clarkson, W.~D. Smith, and B.~G. Quinn,
``A linear-time nearest point algorithm for the lattice $A_n^{*}$,''
in \emph{Proc. Int. Symp. Information Theory and its Applications
(ISITA'08)}, Auckland, New Zealand, Dec.~2008, pp.~1--5.

\bibitem{TYK2010}
K.~Takizawa, H.~Yagi, and T.~Kawabata,
``Closest point algorithms with $\ell_p$ norm for root lattices,''
in \emph{Proc. IEEE Int. Symp. Information Theory (ISIT'10)},
Austin, TX, June 2010, pp.~1042--1046.

\bibitem{fanstarPaper}
Y.~Reznik,
``Faster closest-point algorithms for the $A_n^{*}$ lattices,''
preprint, 2026.

\bibitem{fanstar}
Y.~Reznik,
\emph{fanstar: Faster closest-point algorithms for the $A_n^{*}$ lattices},
open-source software.
\url{https://github.com/yuriy-a-reznik/fanstar}

\end{thebibliography}
\end{document}